\begin{document}
%\renewcommand{\appendixsectionformat}[2]{Appendix}
%\newtheoremrep{proposition}{Proposition}

% \newcolumntype{d}[1]{D{.}{.}{#1}}
% \newcolumntype{.}{D{.}{.}{-1}}
% \makeatletter
% \newcolumntype{B}[3]{@{\ }>{\boldmath\DC@{#1}{#2}{#3}}c<{\DC@end}@{}}
% \makeatother
% \newcommand\bm[1]{\multicolumn{1}{B{.}{.}{-1}}{#1}}
% \newcommand\mc[1]{\multicolumn{1}{c}{#1}} % handy shortcut macro

%\newcommand{\C}{\ensuremath{\mathcal C}}
%\newcommand{\K}{\ensuremath{\mathcal K}}
%\newtheorem{obs}{Observation}
%\newtheorem{defi}{Definition}
%\usepackage{lineno}
%\renewcommand{\frac}[2]{\nicefrac{#1}{#2}}
\newcommand{\lip}{longest induced path\xspace}
\newcommand{\lipsc}{\textsc{LongestInducedPath}\xspace}

\newcommand{\lpBase}{\ensuremath{\mathrm{LP}_{\mathrm{Base}}}\xspace}
\newcommand{\ilpBase}{\ensuremath{\mathrm{I}\lpBase}\xspace} %the inherited xspace is gobbled. hence another one here!

\newcommand{\lpWalk}{\ensuremath{\mathrm{LP}_{\mathrm{Walk}}}\xspace}
\newcommand{\ilpWalk}{\ensuremath{\mathrm{I}\lpWalk}\xspace}

\newcommand{\lpCut}{\ensuremath{\mathrm{LP}_{\mathrm{Cut}}}\xspace}
\newcommand{\ilpCut}{\ensuremath{\mathrm{I}\lpCut}\xspace}

\newcommand{\lpFlow}{\ensuremath{\mathrm{LP}_{\mathrm{Flow}}}\xspace}
\newcommand{\ilpFlow}{\ensuremath{\mathrm{I}\lpFlow}\xspace}

\newcommand{\pCut}{\ensuremath{\mathcal{P}_{\mathrm{Cut}}}\xspace}
\newcommand{\pFlow}{\ensuremath{\mathcal{P}_{\mathrm{Flow}}}\xspace}
\newcommand{\ppFlow}{\ensuremath{\mathcal{P}'_{\mathrm{Flow}}}\xspace}

\newcommand{\Cnci}{\texttt{C}\ensuremath{^{\mathrm{n,c}}_{\mathrm{int}}}\xspace}
\newcommand{\Cncf}{\texttt{C}\ensuremath{^{\mathrm{n,c}}_{\mathrm{frac}}}\xspace}
\newcommand{\Cni}{\texttt{C}\ensuremath{^{\mathrm{n}}_{\mathrm{int}}}\xspace}
\newcommand{\Cnf}{\texttt{C}\ensuremath{^{\mathrm{n}}_{\mathrm{frac}}}\xspace}
\newcommand{\Cci}{\texttt{C}\ensuremath{^{\mathrm{c}}_{\mathrm{int}}}\xspace}
\newcommand{\Ccf}{\texttt{C}\ensuremath{^{\mathrm{c}}_{\mathrm{frac}}}\xspace}
\newcommand{\Ci}{\texttt{C}\ensuremath{_{\mathrm{int}}}\xspace}
\newcommand{\Cf}{\texttt{C}\ensuremath{_{\mathrm{frac}}}\xspace}
\newcommand{\Fnc}{\texttt{F}\ensuremath{^{\mathrm{n,c}}}\xspace}
\newcommand{\WW}{\texttt{W}\xspace}

\newcommand{\cvi}{\Cnci\xspace}
\newcommand{\nvi}{\Cni\xspace}
\newcommand{\cvf}{\Cncf\xspace}
\newcommand{\cei}{\Cci\xspace}
\newcommand{\cef}{\Ccf\xspace}
\newcommand{\cvm}{\Fnc\xspace}
\newcommand{\AcL}{\WW\xspace}

\newcommand{\dlt}{\delta^*}

\newcommand{\CRN}{\texttt{RWC}\xspace}
\newcommand{\MG}{\texttt{MG}\xspace}
\newcommand{\BAS}{\texttt{BAS}\xspace}
\newcommand{\BAL}{\texttt{BAL}\xspace}
\newcommand{\BA}{\texttt{BA}\xspace}

\newcommand{\unsc}{\adjustbox{scale={0.5}{1},raise={1pt}{\height}}{\textunderscore}}

\newcommand{\mysub}[1]{\subsubsection{#1}}
%\newcommand{\mysub}[1]{\medskip\noindent\textsf{\textbf{#1}}}

%\graphicspath{{./graphics/}}%helpful if your graphic files are in another directory

\bibliographystyle{splncs04}% the mandatory bibstyle

\title{An Experimental Study of ILP Formulations for the Longest Induced Path Problem}
\titlerunning{Experimental Study of ILP Formulations for Longest Induced Path}

\author{Fritz B\"okler\orcidID{0000-0002-7950-6965} \and Markus Chimani\orcidID{0000-0002-4681-5550} \and Mirko~H.\ Wagner\orcidID{0000-0003-4593-8740} \and Tilo Wiedera\orcidID{0000-0002-5923-4114}}

\institute{Theoretical Computer Science, Osnabrück University, Germany
\email{\{fboekler,markus.chimani,mirwagner,tilo.wiedera\}@uni-osnabrueck.de}
}

\authorrunning{F.\ B\"okler, M.\ Chimani, M.\ H.\ Wagner, and T.\ Wiedera}

\maketitle

%TODO mandatory: add short abstract of the document
\begin{abstract}
	Given a graph $G=(V,E)$, the \lipsc problem asks for a maximum cardinality node subset $W\subseteq V$ such that the graph induced by $W$ is a path.
	It is a long established problem with applications, e.g., in network analysis.
	We propose novel integer linear programming (ILP) formulations for the problem and discuss efficient implementations thereof.
	Comparing them with known formulations from literature, we prove that they are beneficial in theory, yielding stronger relaxations.
	Moreover, our experiments show their practical superiority.
\end{abstract}

\section{Introduction}

Let $G=(V,E)$ be an undirected graph and $W\subseteq V$.
The \emph{$W\!$-induced graph} $G[W]$ contains exactly the nodes $W$ and those edges of
$G$ whose incident nodes are both in $W\!$. If $G[W]$ is a path,
	%---i.e., we may order $W$ such that there is an edge between any two subsequent nodes, but no further edges---
	it is called an \emph{induced path}.
The length of a \lip is also referred to as the \emph{induced detour number} which was introduced more than 30 years ago~\cite{buckleyHarary}.
We denote the problem of finding such a path by \lipsc.
It is known to be \NP-complete, even on bipartite graphs~\cite{garey1979computers}.

The \lipsc problem has applications in molecular physics, analysis of social, telecommunication, and more general transportation networks~\cite{borgatti2013,jackson2010,barabasi2016,newman2010}~as well as pure graph and complexity theory.
It is closely related to the graph \emph{diameter}---the length of the longest among all shortest paths between any two nodes, which is a commonly analyzed communication property of social networks~\cite{matsypuraEtAl}.
A \lip witnesses the largest diameter that may occur by the deletion of any node subset in a node failure scenario~\cite{matsypuraEtAl}.
The \emph{tree-depth} of a graph is the minimum depth over all of its depth-first-search trees,
	and constitutes an upper bound on its treewidth~\cite{bodlaenderGilbertHafsteinssonKloks}, which is a well-established measure in parameterized complexity and graph theory.
Recently, it was shown that any graph class with bounded degree has bounded induced detour number iff it has bounded tree-depth~\cite{nesetrilOssonaDeMendez}.
Further, the enumeration of induced paths can be used to predict nuclear magnetic resonance~\cite{2014arXiv1404.7610U}.

\lipsc is not only \NP-complete, but also \W[2]-complete~\cite{cf2007}
and does not allow a polynomial $\mathcal{O}(|V|^{1/2-\epsilon})$-approximation, unless $\NP = \ZPP$~\cite{hastad1999,10.1007/BFb0028990}.
On the positive side, it can be solved in polynomial time for several graph classes, e.g.,
those of bounded mim-width (which includes interval, bi-interval, circular arc, and permutation graphs)~\cite{DBLP:journals/corr/abs-1708-04536} as well as $k$-bounded-hole, interval-filament, and other decomposable graphs~\cite{GAVRIL2002203}.
Furthermore, there are \NP-complete problems, such as \textsc{$k$-Coloring} for $k\geq 5$~\cite{GOLOVACH2014107} and
\textsc{Independent Set} \cite{LOZIN2003167}, that are polynomial on graphs with bounded induced detour number.

Recently the first non-trivial, general algorithms to solve the \lipsc problem exactly were devised by Matsypura et al.~\cite{matsypuraEtAl}.
There, three different integer linear programming (ILP) formulations were proposed: the first searches for a subgraph with largest diameter; the second utilizes properties derived from the average distance between two nodes of a subgraph;
the third models the path as a walk in which no shortcuts can be taken. Matsypura et al.\ show that the latter (see below for details) is the most effective in practice.

\mysub{Contribution.}
%In this paper,
%We propose alternative ILP formulations: one based on multi-commodity-flow, and one based on cuts and subtour elimination (\cref{sec:cutILP}).
In \cref{sec:cutILP}, we propose novel ILP formulations based on cut and subtour elimination constraints.
We obtain strictly stronger relaxations than those proposed in~\cite{matsypuraEtAl} and describe a way to strengthen them even further in \cref{sec:polyhedral}.
After discussing some algorithmic considerations in \cref{sec:compFrame}, we show in \cref{sec:experiments} that our most effective models are also superior in practice.

\section{Preliminaries}
\label{sec:prelim}
\mysub{Notation.}
For~$k\in\mathbb N$, let $[k]\coloneqq \{0,\ldots,k-1\}$.
Throughout this paper, we consider a connected, undirected, simple graph $G=(V,E)$ as our input.
Edges are cardinality-two subsets of $V$.
If there is no ambiguity, we may write $uv$ for an edge~$\{u,v\}$.
Given a graph~$H$, we refer to its nodes and edges by $V(H)$ and $E(H)$, respectively.
%We define $\delta(W)$ as the set of edges in the cut induced by $W \subseteq V$, i.e., the edges between $W$ and $\overline W \coloneqq V \setminus W$.
%When considering directed graphs, we use $\delta_+(W)$ ($\delta_-(W)$) to denote the arcs in $\delta(W)$ directed towards $\overline W$ ($W$, respectively).
%For notational simplicity, we may omit braces when referring to node sets of cardinality one.
%%%%MC: diese $\delta$ (ohne *) werden nie benutzt. und die definition ist nicht eindeutig, dass sie sich auf den eingabegraph beziehen (sonst $\delta_H$...) $\to$ erst dort definieren wo benutzt!}
%For $W\subseteq V$, let $G[W]\coloneqq (W,\{e \in E : |e \cap W|=2\})$ denote the node-induced subgraph of $G$.
Given a cycle $C$ in $G$, a \emph{chord} is an edge connecting two nodes of $V(C)$ that are not neighbors along~$C$.

\mysub{Linear programming (cf., e.g.,~\cite{schrijver}).}
A \emph{linear program} (LP) consists of a cost vector~$c\in\mathbb R^d$
	together with a set of linear inequalities, called \emph{constraints}, that define a polyhedron~$\mathcal P$ in $\mathbb R^d$.
We want to find a point~$x \in \mathcal P$ that maximizes the \emph{objective function}~$c^\intercal x$.
This can be done in polynomial time.
	Unless $\P=\NP$, this is no longer true when restricting $x$ to have integral components;
	the so-modified problem is an \emph{integer linear program} (ILP).
	Conversely, the \emph{LP relaxation} of an ILP is obtained by dropping the integrality constraints on the components of~$x$.
	The optimal value of an LP relaxation is a dual bound on the ILP's objective; e.g., an upper bound for maximization problems.
%There are several highly performant solvers for this type of problem
%	and an abundance of algorithmic theory (see, e.g., \cite{JuengerLNNPRRW10}).
As there are several ways to \emph{model} a given problem as an ILP,
%These reductions are referred to as \emph{models}.
%We usually identify a model with its output ILP.
	one aims for models that yield small dimensions and strong dual bounds, % where the objective value of the LP relaxation is close to that of the ILP.
to achieve good practical performance.
This is crucial,
	as ILP solvers are based on a branch-and-bound scheme that relies on iteratively solving LP relaxations to obtain dual bounds on the ILP's objective.
When a model contains too many constraints, it is often sufficient
	to use only a reasonably sized constraint subset to achieve provably optimal solutions.
This allows us to add constraints during the solving process, which is called \emph{separation}.
We say that model~$A$ is \emph{at least as strong} as model~$B$,
	if for all instances, the LP relaxation's value of model~$A$ is no further from the ILP optimum than that of~$B$.
If there also exists an instance for which $A$'s LP relaxation yields a tighter bound than that of~$B$, then~$A$ is \emph{stronger} than~$B$.

When referring to models, we use the prefix ``$\mathrm{ILP}$'' with an appropriate subscript.
When referring to their respective LP relaxations we write ``$\mathrm{LP}$'' instead.

\mysub{Walk-based model (state-of-the-art).}
Recently, Matsypura et al.~\cite{matsypuraEtAl} proposed an ILP model, \ilpWalk, that is the foundation of the fastest known exact algorithm (called \texttt{A3c} therein) for \lipsc.
They introduce timesteps, and for every node $v$ and timestep $t$ they introduce a variable that is $1$ iff $v$ is visited at time $t$.
%This models a timed walk.
Constraints guarantee that nodes at non-consecutive time points cannot be adjacent.
We recapitulate details in \cref{walk_appendix}.
Unfortunately, \ilpWalk yields only weak LP relaxations (cf.\ \cite{matsypuraEtAl} and \cref{sec:polyhedral}).
To achieve a practical algorithm, Matsypura et al.\ iteratively solve \ilpWalk for an increasing number of timesteps until the path found does not use all timesteps, i.e., a non-trivial dual bound is encountered.
In contrast to~\cite{matsypuraEtAl}, we consider the number of edges in the path (instead of nodes) as the objective value.

\section{New Models}\label{sec:cutILP}
%Our models are based only on a single decision variable $x_e$ for each edge $e$ specifying whether $e$ is contained
%in the solution.
%Our models yield stronger LP relaxations than \ilpWalk (see \cref{sec:polyhedral}).
We aim for models that exhibit stronger LP relaxations and are practically solvable via single ILP computations.
To this end, we consider what we deem a more natural variable space.
We start by describing a partial model \ilpBase, which by itself is not sufficient but constitutes the core of our new models.
%We use \ilpBase when comparing the models w.r.t.\ their strength and use it as a starting point in our algorithms.
To obtain a full model, \ilpCut, we add constraints that prevent subtours.

For notational simplicity, we augment $G$ to~$G^*\coloneqq (V^* \coloneqq V\cup\{s\},E^* \coloneqq E\cup\{ sv \}_{v\in V})$
	by adding a new node~$s$ that is adjacent to all nodes of $V$.
Within $G^*$, we look for a longest induced cycle through $s$, where we ignore induced chords incident to~$s$.
Searching for a cycle, instead of a path, allows us to homogeneously require that each \emph{selected} edge, i.e., edge in the solution, has exactly two adjacent edges that are also selected.
	Let $\dlt(e)\subset E^*$ denote the edges adjacent to edge~$e$ in $G^*$.
Each binary $x_e$-variable is $1$ iff edge $e$ is selected.
%
% 	We present our two new models in two parts:
% 	The (partial) model \ilpBase shown below represents the part of the model which is common to both our models.
% 	Afterwards, we present two sets of constraints that need to be added to \ilpBase to obtain full models for \lipsc: \ilpCut and \ilpFlow.
We denote the partial model below by \ilpBase:\!\!\!
\begin{subequations}
\begin{align}
	\max & \sum_{e \in E} x_e \label{eq:max}\\
	\text{s.t. }
	& \sum_{v \in V} x_{sv} = 2 \label{st:2s}\\
	& 2x_e \leq \sum_{f \in \dlt(e)} x_f \leq 2 && \forall e \in E \label{st:adj-edges} \\
	& x_e \in \{0,1\} && \forall e \in E^* \label{st:e_int}
\end{align}
\end{subequations}
Constraint~(\ref{st:2s}) requires to select exactly two edges incident with $s$.
To prevent chords, constraints~(\ref{st:adj-edges}) enforce that any (original) edge $e\in E$, even if not selected itself, is adjacent to at most two selected edges;
if $e$ is selected, precisely two of its adjacent edges need to be selected as well.

\mysub{Establishing connectivity.}
The above model is not sufficient: it allows for the solution to consist of multiple disjoint cycles, only one of which contains~$s$.
But still, these cycles have no chords in $G$, and no edge in $G$ connects any two cycles.
To obtain a longest single cycle $C$ through $s$---yielding the \lip $G[V(C)\setminus\{s\}]$---we thus have to forbid additional cycles in the solutions that
are not containing $s$.
In other words, we want to enforce that the graph induced by the $x$-variables is connected.
%flow%There are two established ways to achieve this: via \emph{cut} or \emph{(generalized) subtour elimination} constraints on the one hand,
%flow%and via a compact \emph{multi-commodity flow} model on the other hand.

There are several established ways to achieve connectivity: 
To stay with \emph{compact} (i.e., polynomially sized) models, we could, e.g., augment \ilpBase
with Miller-Tucker-Zemlin constraints (which are known to be polyhedrally weak~\cite{bektasGouveia}) or multi-commodity-flow formulations (\ilpFlow; cf.~\cref{sec:flow}).
However, herein we focus on augmenting \ilpBase with \emph{cut} or \emph{(generalized) subtour elimination} constraints, resulting in the (non-compact) model we denote by \ilpCut, that is detailed later.
Such constraints are a cornerstone of many
	algorithms for diverse problems where they are typically superior (in particular in practice) than other known approaches~\cite{polzin2004, fischerSalazarGonzales, fischetti91}.
	While \ilpCut and \ilpFlow are polyhedrally equally strong (cf.\ \cref{sec:polyhedral}), 
	we know from other problems that the sheer size of the latter typically nullifies the potential benefit of its compactness. Preliminary experiments show that this is indeed the case here as well.

\mysub{Cut model (and generalized subtour elimination).}
Let $\dlt(W)\coloneqq\{ w \bar{w} \in E^* \mid w\in W, \bar{w}\in V^* \setminus W \}$ be the set of edges in the cut induced by $W \subseteq V^*$.
For notational simplicity, we may omit braces when referring to node sets of cardinality one.
We obtain \ilpCut by adding \emph{cut constraints} to \ilpBase:
\begin{subequations}
\begin{align}
	& \sum_{e \in \dlt(v)}x_e \leq \sum_{e \in \dlt(W)}x_e && \forall W \subseteq V, v \in W \label{st:cut}
\end{align}
These constraints ensure that if a node~$v$ is incident to a selected edge (by~(\ref{st:adj-edges}) there are then two such selected edges), any cut separating $v$ from $s$ contains at least two
	selected edges, as well. Thus, there are (at least) two edge-disjoint paths between $v$ and $s$ selected. Together with the cycle properties of \ilpBase, we can deduce that all selected edges form a common cycle through $s$.

An alternative view leads to \emph{subtour elimination constraints} $\sum_{e \in E : e \subseteq W} x_e \leq |W|-1$ for $W \subseteq V$, which
prohibit cycles not containing $s$ via counting.
It is well known that these constraints can be generalized using
binary node variables $y_v \coloneqq \tfrac12 \sum_{e \in \dlt(v)} x_e$ that indicate whether a node~$v\in V$ participates in the solution (in our case: in the induced path)~\cite{goemans}.
%We aim to replace $|W|$ by $\sum_{w \in W}y_w$,
%however, as we may have $\sum_{w \in W}y_w < 1$ such a constraint would be infeasible.
%Hence,
	\emph{Generalized subtour elimination constraints} thus take the form
\begin{align}
	& \sum_{e \in E : e \subseteq W} x_e \leq \sum_{w \in W \setminus \{v\}}y_w && \forall W \subseteq V, v \in W. \label{st:gsec}
\end{align}
\end{subequations}
One expects \ilpCut and ``\ilpBase with constraints~(\ref{st:gsec})'' to be equally strong as this is well-known for standard Steiner tree, and other related models~\cite{Goemans91, ChimaniKandyba10, Ljubic08}.
In fact, there even is a direct one-to-one correspondence between cut constraints~(\ref{st:cut}) and generalized subtour elimination constraints~(\ref{st:gsec}):
By substituting node-variables with their definitions in~(\ref{st:gsec}), we obtain
	$2 \sum_{e \in E : e \subseteq W} x_e \leq \sum_{w \in W \setminus \{v\}}\sum_{e \in \dlt(v)}x_e$.
	A simple rearrangement yields the corresponding cut constraint~(\ref{st:cut}).

\mysub{Clique constraints.}
We further strengthen our models by introducing additional inequalities.
Consider any clique (i.e., complete subgraph) in $G$.
The induced path may contain at most one of its edges to avoid induced triangles:
\begin{align}
	& \sum_{e \in E:e\subseteq Q}x_e \leq 1 && \forall Q \subseteq V \colon G[Q]\text{ is a clique} \label{st:cl_e}
\end{align}

% We note in passing that we could analogously define \eemph{biclique} constraints
% (for complete bipartite subgraphs), from which we may select at most $2$ edges;
% the lack of general effective separation routines and their dubious benefit in
% practice lets us refrain from considering them in more detail.

\section{Polyhedral Properties of the LP Relaxations}
\label{sec:polyhedral}

We compare the above models w.r.t.\ the strength of their LP relaxations, i.e.,
	the quality of their dual bounds.
Achieving strong dual bounds is a highly relevant goal also in practice:
	one can expect a lower running time for the ILP solvers in case of better dual bounds since
	fewer nodes of the underlying branch-and-bound tree have to be explored.
	%We defer the proofs of this section \cref{sec:app_proofs}.

Since \ilpWalk requires \emph{some} upper bound~$T$ on the objective value, we can only reasonably compare
this model to ours by assuming that we are also given this bound as an explicit constraint.
%(For the most general scenario, one may assume the trivial upper bound $T=n-1$; but we may also compare the models in a scenario where $T$ is, e.g., infeasibly low.)
Hence, no dual bound of any of the considered models gives a worse (i.e., larger) bound than $T$.
As it has already been observed in \cite{matsypuraEtAl}, \lpWalk in fact \emph{always} yields this worst case bound:

%For the sake of completeness, we repeat the following observation.
\begin{proposition}\textnormal{\textbf{(Proposition~5 from~\cite{matsypuraEtAl})}}\label{thm:walk-value}
	For every instance and every number $T+1 \leq |V|$ of timesteps \lpWalk has objective value $T$.
\end{proposition}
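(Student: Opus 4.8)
The plan is to prove the two inequalities $\lpWalk \le T$ and $\lpWalk \ge T$ separately. The upper bound needs essentially no work: in $\ilpWalk$ (as recapitulated in \cref{walk_appendix}) the objective counts the selected edges of a walk laid out over the $T+1$ timesteps in $[T+1]$, hence it is at most the number of used timesteps minus one; and since the constraints admit at most one node per timestep, summing these over all $T+1$ timesteps bounds the total node--timestep mass $\sum_{v,t} x_{v,t}$ by $T+1$ even for fractional solutions, so every feasible point has objective at most $T$. The whole substance of the proposition therefore lies in the matching lower bound, i.e.\ in exhibiting a feasible \emph{fractional} point of value exactly $T$.

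For that I would use the maximally symmetric candidate: set $x_{v,t} = 1/|V|$ for every node $v \in V$ and every timestep $t \in [T+1]$, and fix any auxiliary ``timestep-is-used'' variables to the value ($1$) that this choice forces. Then one checks feasibility constraint by constraint. Each timestep carries total mass $\sum_{v \in V} 1/|V| = 1$, so the one-node-per-timestep constraints are tight; each node carries total mass $(T+1)/|V| \le 1$, which is exactly where the hypothesis $T+1 \le |V|$ is used, so the one-timestep-per-node constraints hold; all timesteps are used to the same extent, so whichever exact form the ``used timesteps form an initial segment'' constraints take, they are satisfied; because $G$ is connected every node $v$ has a neighbour, so the contiguity constraints between consecutive timesteps reduce to $1/|V| \le \deg(v)/|V|$; and each ``no shortcut'' inequality, being of the form $x_{v,t} + x_{w,t'} \le 1$ for an edge $vw$ and $|t - t'| \ge 2$, reads $2/|V| \le 1$, true for $|V| \ge 2$. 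The objective of this point is $\sum_{v,t} x_{v,t} - 1 = |V| \cdot (T+1)/|V| - 1 = T$, which meets the upper bound.

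I expect the only real obstacle to be bookkeeping: one must make sure the uniform assignment (together with the forced values of any linking variables) satisfies \emph{every} constraint of the specific formulation of~\cite{matsypuraEtAl}, and one must dispose of the degenerate case $|V| = 1$, which forces $T = 0$, has no edges and no ``no shortcut'' constraints, and trivially has LP value $0 = T$. None of this is conceptually hard, since for $|V| \ge 2$ the constant $1/|V|$ is small enough that each inequality is either slack or tight as intended. Combining the two bounds yields $\lpWalk = T$.
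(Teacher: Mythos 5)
Your proposal is correct and takes essentially the same route as the paper: the paper's proof consists precisely of exhibiting the uniform fractional point $x_v^t = 1/|V|$ and noting its feasibility and value, which you verify in detail (the upper bound $\leq T$ follows, as you say, from the one-node-per-timestep constraints). The only cosmetic discrepancy is that the no-shortcut constraints in the cited formulation aggregate over all $\tau \geq t+2$ rather than being pairwise, but the uniform point satisfies them regardless since $T \leq |V|-1$.
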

\begin{proof}
	\label{sec:proofs}
	We set $x_v^t$ to $1/|V|$ for all $v\in V$ and $t\in [T+1]$.
	It is easy to see that this solution is feasible and attains the claimed objective value.
	\qed
\end{proof}

Note that \Cref{thm:walk-value} is independent of the graph.
Given that the \lip of a complete graph has length~$1$,
	we also see that the integrality gap of \ilpWalk is unbounded.
Furthermore, this shows that \ilpBase cannot be weaker than \ilpWalk.
We show that already the partial model \ilpBase is in fact \emph{stronger} than \ilpWalk.
Let therefore $\theta\coloneqq T-\mathrm{OPT}\in\mathbb N$, where $\mathrm{OPT}$ is the instance's (integral) optimum value.

\begin{proposition}\label{thm:cut-stronger}
	\ilpBase is stronger than \ilpWalk.
	Moreover, for every $\theta \geq 1$ there is an infinite family of instances on which \lpBase has objective value at most $\mathrm{OPT}+1$ and \lpWalk has objective value at least $T=\mathrm{OPT}+\theta$.
%	\ilpBase is stronger than \ilpWalk. In particular, for every $\theta$ we have infinitely many graphs where \lpBase gives bounds less than $1$ from $\mathrm{OPT}$, while \lpWalk gives the worst possible bounds $\mathrm{OPT}+\theta = T$.
\end{proposition}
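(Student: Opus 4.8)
The plan is to split the statement into the two claims and dispatch them economically. For ``\ilpBase is \emph{at least} as strong as \ilpWalk'' nothing new is needed: the discussion just above the proposition already notes that, once we add the constraint $\sum_{e\in E}x_e\le T$ to \ilpBase for a fair comparison, no considered model can have an LP value exceeding $T$, whereas \Cref{thm:walk-value} says \lpWalk \emph{always} attains exactly $T$. So the whole proof reduces to producing, for each $\theta\ge 1$, an infinite family on which \lpBase $\le\mathrm{OPT}+1$ while \lpWalk $=T=\mathrm{OPT}+\theta$; any single member of such a family then also witnesses \emph{strict} strength, giving the first sentence.

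For the family I would take stars. Fix $\theta\ge 1$ and, for each integer $m\ge\theta+2$, let $G=K_{1,m}$ with centre $c$ and leaves $\ell_1,\dots,\ell_m$. Then $\mathrm{OPT}=2$, the longest \lip being $\ell_i\,c\,\ell_j$. Put $T\coloneqq\mathrm{OPT}+\theta=\theta+2$; the walk model is then run with $T+1=\theta+3$ timesteps, and $\theta+3\le m+1=|V|$, so \Cref{thm:walk-value} applies and yields $\lpWalk=T=\mathrm{OPT}+\theta$. For \lpBase I would bound the objective~(\ref{eq:max}) from above directly. In $G^*$ the edges adjacent to an original edge $c\ell_i$ are exactly $\{c\ell_j:j\ne i\}\cup\{sc,s\ell_i\}$, so the right inequality of~(\ref{st:adj-edges}) reads $\sum_{j\ne i}x_{c\ell_j}+x_{sc}+x_{s\ell_i}\le 2$. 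Summing this over all $i$ and using~(\ref{st:2s}) to replace $\sum_i x_{s\ell_i}$ by $2-x_{sc}$, the $x_{c\ell_j}$-terms collapse to $(m-1)\sum_{e\in E}x_e$ and one gets $(m-1)\bigl(\sum_{e\in E}x_e+x_{sc}\bigr)+2\le 2m$, hence $\sum_{e\in E}x_e\le\sum_{e\in E}x_e+x_{sc}\le 2$. Therefore $\lpBase\le 2=\mathrm{OPT}\le\mathrm{OPT}+1$ (and adding the comparison constraint $\sum_{e\in E}x_e\le T$ changes nothing, since $2<T$); in fact equality holds, as the $4$-cycle $s\,\ell_i\,c\,\ell_j\,s$ is a feasible integral solution of \ilpBase with objective $2$. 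Letting $m$ range over all integers $\ge\theta+2$ gives the infinite family, and picking any one member already gives $\lpBase=2<T=\lpWalk$, establishing strictness.

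The only genuine difficulty is the \emph{choice of instance}, not the verification: the $\le 2$ bound is a one-line aggregation of~(\ref{st:adj-edges}) and $\mathrm{OPT}=2$ is immediate. What one needs is a family whose longest induced path stays bounded (so that $T$ can be pushed $\theta$ above $\mathrm{OPT}$ while the walk model's node budget still obeys $T+1\le|V|$) and yet on which \lpBase remains within $1$ of $\mathrm{OPT}$. The obvious short-\lip graphs fail the second requirement: for $K_n$ or $K_{a,b}$ the uniform fractional assignment $x_e=\mathrm{const}$ already drives the \lpBase optimum to $\Theta(n)$, far above $\mathrm{OPT}+1$. The star is essentially the sparsest connected non-complete choice, and it is precisely this sparsity that lets~(\ref{st:adj-edges}) cap the total selected edge weight at $2$; once that is seen, everything else is bookkeeping.
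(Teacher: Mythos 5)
Your proof is correct, and it is essentially the route the paper itself points to before deliberately stepping away from it: the paper's proof opens by observing that ``already a star with at least three leaves proves the claim, as \lpBase guarantees a solution of optimal value $2$,'' which is exactly your family $K_{1,m}$, and your aggregation of the right-hand inequalities of~(\ref{st:adj-edges}) together with~(\ref{st:2s}) is a clean, fully worked-out justification of that one-line assertion (the paper never spells it out). The difference is that the paper then declares stars ``easy to preprocess'' and instead carries out the detailed computation on a less degenerate family: two hubs $v_L,v_R$ joined by $\ell\ge 3$ internally disjoint length-$2$ paths plus a pendant edge $v_Rv'$, where $\mathrm{OPT}=3$ and the same summing-of-constraints trick yields $\sum_{e\in E}x_e\le 4-\tfrac1\ell$, i.e., a bound strictly between $\mathrm{OPT}$ and $\mathrm{OPT}+1$. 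What the paper's choice buys is robustness of the separating example against trivial degree-one/star reductions (and, via the remark on replacing cycle edges by pairs of length-$2$ paths, even two-connected witnesses); what your choice buys is a shorter argument and the stronger conclusion $\lpBase=\mathrm{OPT}$ exactly on the witness family. Both satisfy the quantitative claim ``at most $\mathrm{OPT}+1$,'' your handling of the $T+1\le|V|$ requirement via $m\ge\theta+2$ is correct, and the appeal to \cref{thm:walk-value} for the $\lpWalk=T$ side is identical in both proofs.
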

\begin{proof}
    By \cref{thm:walk-value}, \lpWalk will always attain value $T=\mathrm{OPT}+\theta$.
    To show the strength claim, it thus suffices to give instances where \lpBase yields a strictly tighter bound.

    Already a star with at least three leaves proves the claim, as %it is easy to verify that
    \lpBase guarantees a solution of optimal value $2$.
    However, it can be argued that such graphs and substructures are easy to
    preprocess. %: whenever there are multiple degree-1 nodes adjacent to a common node, only one of them has to be retained (with the sole exception of a star instance, requiring two leaves).
    Thus, we prove the claim with a more suitable instance class.

    Choose any $\ell\geq 3$, % and construct the following graph $G_\ell$. %\geq \theta$.
    start with two nodes $v_L,v_R$, connect them with $\ell$ internally node-disjoint paths of length 2, and add new node $v'$ with edge $v_Rv'$.
    A \lip in this graph %$G_\ell$
    contains exactly $3$~edges: $v_Rv'$ and the two edges of one of the $v_L$-$v_R$-paths.
    Let $\deg(v) \coloneqq |\{e \in E : v \in E\}|$ denote the degree of node~$v$ in~$G$ without added star~$s$.
    By summing all constraints~(\ref{st:adj-edges}) we deduce
    \[2|E| \geq \sum_{e\in E}\sum_{f\in\dlt(e)} x_f \geq \underbrace{\sum_{e\in E}\sum_{f\in\dlt(e)\cap E} x_f}_{a} + \underbrace{\sum_{v\in V} \deg(v)\cdot x_{sv}}_b.\]
    For the double sum $a$ we see that any edge incident to $v_L$ or $v'$
    is considered $\ell$ times, i.e., it has $\ell$ adjacent edges, while the other edges are considered $\ell+1$ times. Thus $a\geq\ell \sum_{e\in E} x_e$.
    In the second sum $b$, $v_Rv'$ is the only edge with coefficient $1$ (instead of $\geq 2$), and we
    thus have $b\geq (2\sum_{v\in V} x_{sv}) - x_{sv'}$.
    By~(\ref{st:2s}) and the variable bounds we have $b\geq 4 - 1 = 3$.
    Since $|E|=2\ell+1$ we overall have
    $2(2\ell+1) \geq \ell \sum_{e\in E} x_e + 3$, giving objective value $\sum_{e\in E} x_e \leq 4-\tfrac1\ell$.
    As the objective must be integral, this even yields the optimal bound~$3$ when using \lpBase within an ILP solver.

    We furthermore note that, to achieve strictly two-connected graphs, we could, e.g.,
    also consider a cycle where each edge is replaced by two
    internally node-disjoint paths of length 2. However, in the above instance class the gap between the relaxations is larger, which is why we refrain from giving further details
    to the latter class.
	\qed
\end{proof}

Since \ilpCut only has additional constraints compared to \ilpBase
%flow%and \ilpFlow
, this implies that \ilpCut is
%flow%two
 also stronger than \ilpWalk.
In fact, since constraints~\eqref{st:cut} cut off infeasible integral points contained in \ilpBase, \lpCut is clearly even a strict subset of \lpBase.
	As noted before, we can show that using a multi-commodity-flow scheme (cf.~\cref{sec:flow})
	results in LP relaxations equivalent to \lpCut:
	\begin{proposition}\label{thm:cut-eq-flow}
		\ilpFlow and \ilpCut are equally strong.
%		There is a surjective projection from \lpFlow to \lpCut such that
%			the objective value is maintained.
	%		\todo{MC: This should not be the statement, this is the technical way of showing it (and needlessly complicatedly written). Just say that they are equally strong}
	\end{proposition}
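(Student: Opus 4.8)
The plan is to prove \Cref{thm:cut-eq-flow} by establishing the standard bidirectional projection between a cut formulation and a multi-commodity-flow formulation, carried over to our cycle-through-$s$ setting. Since both \ilpFlow and \ilpCut share the common core \ilpBase (in particular the degree constraints~\eqref{st:adj-edges}, the constraint~\eqref{st:2s}, and whatever clique constraints~\eqref{st:cl_e} we include), it suffices to show that, restricted to the $x$-variables, the feasible region of \lpFlow (i.e., the polytope obtained by projecting out the flow variables) coincides with the feasible region of \lpCut. Equivalently, for every fractional $x$ satisfying \lpBase, the cut constraints~\eqref{st:cut} hold if and only if there exists a feasible assignment of the flow variables.

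First I would fix the commodity structure used by \ilpFlow (as defined in \cref{sec:flow}): for each node $v\in V$ with $y_v := \tfrac12\sum_{e\in\dlt(v)}x_e > 0$ we route $2y_v$ units of a $v$-commodity from $s$ to $v$ (or, depending on the exact formulation there, two units scaled by the ``is $v$ in the path'' indicator), with capacity of edge $e$ for each commodity bounded by $x_e$ and flow conservation at all intermediate nodes. The ``flow $\Rightarrow$ cut'' direction is the easy half: given a feasible flow, for any $W\subseteq V$ and $v\in W$, the $v$-commodity must cross the cut $\dlt(W)$ at least $2y_v$ times net, and since each edge $e\in\dlt(W)$ carries at most $x_e$ units of this commodity, we get $\sum_{e\in\dlt(W)}x_e \ge 2y_v = \sum_{e\in\dlt(v)}x_e$, which is exactly~\eqref{st:cut}; here I would be careful that the direction of the inequality in~\eqref{st:cut} and the definition of $y_v$ line up, using~\eqref{st:adj-edges} to rewrite $\sum_{e\in\dlt(v)}x_e$.

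The ``cut $\Rightarrow$ flow'' direction is the main obstacle and is where the real work sits. The clean way is to invoke the max-flow/min-cut theorem separately for each commodity: fix $v\in V$ and consider the capacitated graph $G^*$ with capacities $x_e$. I claim the minimum $s$-$v$ cut in this graph has value at least $2y_v = \sum_{e\in\dlt(v)}x_e$. Indeed, any $s$-$v$ cut is $\dlt(W)$ for some $W\subseteq V^*$ with $v\in W$, $s\notin W$; after noting $s\notin W$ means $W\subseteq V$, constraint~\eqref{st:cut} applied to this $W$ and $v$ gives $\sum_{e\in\dlt(W)}x_e \ge \sum_{e\in\dlt(v)}x_e = 2y_v$. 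Hence by max-flow/min-cut there is an $s$-$v$ flow of value $2y_v$ respecting capacities $x_e$, and this is precisely a feasible assignment of the $v$-commodity (each other commodity is handled independently, since capacities are per-commodity). Assembling these per-commodity flows yields a feasible point of \lpFlow with the same $x$-part, and the same objective $\sum_{e\in E}x_e$.

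The remaining care is purely bookkeeping: (i) confirming that the flow model in \cref{sec:flow} indeed uses per-commodity (not aggregate) capacities $x_e$, so that commodities do not interfere and the argument decomposes; (ii) checking that the ``sink demand'' in \ilpFlow is variable, scaling with $y_v$, rather than a fixed constant, so that the equivalence is exact for fractional $x$ and not just integral ones; and (iii) observing that all constraints of \ilpBase and the clique constraints are on $x$ alone and therefore pass through the projection unchanged, so equal strength of the full models follows from equal strength of the $x$-projections. I would state the proof at the level of these four bullet points, citing the analogous Steiner-tree equivalences~\cite{Goemans91,ChimaniKandyba10,Ljubic08} for the routine parts and spelling out only the cut-to-flow max-flow/min-cut step in detail.
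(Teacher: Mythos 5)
Your proposal is correct and follows essentially the same route as the paper's proof: both project \lpFlow onto the $x$-variables and establish that this projection equals \pCut via max-flow/min-cut duality applied per commodity, using the capacity constraints~(\ref{st:mcf-capacity}) and the flow-conservation/supply constraints~(\ref{st:mcf-kirchhoff}). The paper states this far more tersely (leaving both directions to ``the duality of max-flow and min-cut''), whereas you spell out the two directions explicitly, but the underlying argument is identical.
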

	\begin{proof}
		Let \pCut and \pFlow be the polytope of \lpCut and \lpFlow, respectively. Let \ppFlow be the projection of \pFlow onto the $x$-variables
		by ignoring the $z$-variables.
		Then $\pCut=\ppFlow$.
		We show that the projection is surjective. Clearly, it retains the objective value.
	We observe that by constraints~(\ref{st:mcf-capacity}) for any node~$v$
			there can be at most $x_e$ units of flow along edge~$e$ that belong to some commodity~$v$.
		By constraint~(\ref{st:mcf-kirchhoff}), each node~$v \in V$
			sends $\sum_{e \in \dlt(v)}x_e$ units of flow that have to arrive at node~$s$.
		Consequently, the claim---both that any \lpFlow solution maps to an \lpCut solution and vice versa---follows directly from the duality of max-flow and min-cut.
	\qed
	\end{proof}

%flow%Recall that we may add clique constraints to either of these two models.
Let $\ilpCut^k$
%flow%and $\ilpFlow^k$
denote \ilpCut
%flow%and \ilpFlow
with clique constraints added for all cliques on at most~$k$ nodes.
%flow%From \cref{thm:cut-eq-flow} we can deduce:
%flow%\begin{corollary}
%flow%	Let $k \in \mathbb N$. $\lpCut^k$ and $\lpFlow^k$ are equally strong.
%flow%\end{corollary}
We show that increasing the clique sizes yields a hierarchy of ever stronger models.
%Due to the above corollary, it suffices to consider the cut-based model in the following.
\begin{proposition}
\label{theo:cl_v_stronger}
	For any $k \geq 4$, $\ilpCut^k$ is stronger than $\ilpCut^{k-1}$.
\end{proposition}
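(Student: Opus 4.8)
The plan is to show two things: (i) $\ilpCut^k$ is at least as strong as $\ilpCut^{k-1}$, which is immediate since the former has all constraints of the latter plus additional clique constraints for cliques on exactly $k$ nodes; and (ii) there exists an instance where $\lpCut^k$ yields a strictly tighter dual bound than $\lpCut^{k-1}$. The whole work is in (ii), so I would concentrate on constructing such an instance and verifying the two bound computations.

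For the separating instance, the natural candidate is (a suitable padding of) the complete graph $K_k$. In $K_k$ the longest induced path has only one edge, so $\mathrm{OPT}=1$. In $\lpCut^k$ the single clique constraint $\sum_{e\in E:e\subseteq V}x_e\le 1$ directly caps the objective at $1$, matching the optimum. In $\lpCut^{k-1}$, however, the strongest available clique constraints only forbid the total $x$-weight on any $(k-1)$-subset from exceeding $1$; I would exhibit a fractional point — e.g.\ the uniform assignment $x_e=\alpha$ on all original edges, with $x_{sv}$ chosen to satisfy~\eqref{st:2s} and~\eqref{st:adj-edges} — whose objective $\binom{k}{2}\alpha$ strictly exceeds $1$ while respecting all $\binom{k}{2}\binom{k-2}{k-3}$-type $(k-1)$-clique constraints, the cut constraints~\eqref{st:cut}, and the adjacency constraints~\eqref{st:adj-edges}. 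The key arithmetic check is that the binding constraint among the $(k-1)$-cliques is $\binom{k-1}{2}\alpha\le 1$, giving $\alpha=1/\binom{k-1}{2}$ and hence objective $\binom{k}{2}/\binom{k-1}{2}=k/(k-2)>1$ for all finite $k$, so the bound is strictly weaker; one then rounds down to see the ILP-effective bounds already differ ($2$ versus $1$) once $k/(k-2)\ge 2$, i.e.\ $k\le 4$ — so for $k>4$ I would instead argue directly at the LP level (fractional gap) or combine $K_k$ with a pendant structure as in the proof of \cref{thm:cut-stronger} to force an integrality gap. To get an \emph{infinite} family and to guarantee the pure connectivity/adjacency constraints don't already kill the fractional point, I would take disjoint-ish copies or attach a long path, exactly mirroring the $v_L,v_R,v'$ gadget used earlier.

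**The main obstacle** I anticipate is feasibility bookkeeping: one must verify the proposed fractional point satisfies \emph{all} constraints of $\ilpCut^{k-1}$ simultaneously — in particular the cut constraints~\eqref{st:cut} across \emph{every} $W\subseteq V$ and the two-sided adjacency constraints~\eqref{st:adj-edges}, not just the clique constraints — and to check that the \emph{only} reason $\lpCut^{k-1}$ fails to match the optimum is the absence of the size-$k$ clique constraint. A clean way to sidestep delicate cut-constraint checking is to make the separating graph $2$-connected and vertex-transitive (a padded $K_k$ with each vertex blown up or each edge subdivided appropriately) so that a uniform fractional solution is automatically cut-feasible by symmetry; I would pick the simplest such padding that still contains a $K_k$ whose edges carry enough weight to violate the would-be size-$k$ constraint. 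The remaining steps — plugging the point into the objective, and invoking integrality of the objective to upgrade the fractional gap into a genuine ILP-bound gap — are routine, as in \cref{thm:cut-stronger}.
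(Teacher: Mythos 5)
Your proposal matches the paper's proof essentially exactly: the separating instance is $K_k$, the size-$k$ clique constraint pins $\lpCut^k$ to the optimum $1$, and the uniform fractional point with objective $k/(k-2)>1$ (the paper writes it as $\omega=1+\tfrac{2}{k-2}$ with $\tilde x_e=\omega/\binom{k}{2}$ and $\tilde x_{sv}=\tfrac{2}{k}$, which is the same point as your $\alpha=1/\binom{k-1}{2}$) is shown feasible for $\lpCut^{k-1}$ by the same symmetry argument you sketch for the cut constraints. Your detour about rounding and treating $k>4$ differently is unnecessary, since the paper's notion of ``stronger'' is defined purely on LP relaxation values, so the fractional gap $k/(k-2)>1$ already suffices for every $k\geq 4$.
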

%While the example given in the above proof is rather artificial,
%	one may use the same arguments to show the advantage of clique constraints for graphs with richer structure.\todo{MC: finde den satz unnötig, FB: gerne raus}
\begin{proof}
	$\ilpCut^k$ is as least as strong as $\ilpCut^{k-1}$ as we only add new constraints.
	Let $G=K_k$, the complete graph on $k$ nodes.
	By choosing $Q = V$ in constraint~(\ref{st:cl_e}), $\lpCut^k$ has objective value $1$.

	However, $\lpCut^{k-1}$ allows a solution with objective value $\omega \coloneqq 1+\tfrac2{k-2} > 1$:
	We set $\tilde x_e \coloneqq \omega/\binom{k}{2}$ for each $e \in E$ and $\tilde x_{sv} \coloneqq \tfrac2k$ for each $v \in V$ to
		obtain an LP feasible solution~$\tilde x$ to $\lpCut^{k-1}$:
	Clearly, constraints~(\ref{st:2s},\ref{st:adj-edges}) are satisfied.
	The cut constraints~(\ref{st:cut}) are satisfied since edge variables are chosen uniformly (w.r.t.\ the two above edge types)
	and the right-hand side of the constraint sums over at least as many edge variables (per type) as the left-hand side.
	For any clique of size at most $k-1$, the left-hand side of its clique constraint~(\ref{st:cl_e}) sums up to at most $\binom{k-1}{2} \cdot \tilde x_e
	= \binom{k-1}{2}(1+\tfrac2{k-2})/\binom{k}{2}
	%= (1+\tfrac2{k-2})\tfrac{k-2}{k}
	%= \tfrac{k-2}{k}+\tfrac{2}{k}
	%= 1-\tfrac{2}{k}+\tfrac{2}{k}
	= 1$.

	We note that it is straight-forward to generalize $G$,
	%to contain additional graph elements beyond the $K_k$,
	so that it contains $K_k$ only as a subgraph,
	while retaining the property of
	having
	a gap between the two considered~LPs.
	\qed
\end{proof}

\section{Algorithmic Considerations}
\label{sec:compFrame}
%We now describe further details of our algorithmic framework.

%We employed several practical tweaks to speed up the solving process.
%Let $x(W)$ denote $\sum_{e \in E : e \subseteq W} x_{e}$ for $\varnothing \neq W \subseteq V$.
%For every LP solution we will call edges for which the $x$ variable is set a positive value \emph{active}.
%A node, that has an incident active edge is also called active.
%The graph consisting of all active nodes and edges of an LP solution is called the graph induced by this LP solution.

\mysub{Separation.}
Since \ilpCut contains an exponential number of cut constraints~(\ref{st:cut}), it is not practical in its full form.
We follow the traditional separation pattern for branch-and-cut-based ILP solvers:
We initially omit cut constraints~(\ref{st:cut}), i.e., we start with model~$M\coloneqq\ilpBase$.
Iteratively, given a feasible solution to the LP relaxation of~$M$,
	we seek violated cut constraints and add them to $M$.
If no such constraints are found and the solution is integral, we have obtained a solution to \ilpCut.
Otherwise, we proceed by branching or---given a sophisticated branch-and-cut framework---by applying more general techniques.

%In similar problems, adding cut constraints for singletons already initially was shown to be beneficial.
%However, in our case all singleton cut constraints~(\ref{st:cut}) are already satisfied.

%We now describe two exact separation routine to identify violated GSE/Cut constraints in any given LP solution.
%The first one can be called with any LP solution, the second one only separates integral LP solutions, but is in turn much faster both in pratice and theory.
Given an LP solution $\hat{x}$, we call an edge $e\in E$ \emph{active} if $\hat{x}_e > 0$.
Similarly, we say that a node is \emph{active}, if it has an active incident edge.
These active graph elements yield a subgraph $H$ of $G^*$.
For integral LP solutions, we simply compute the connected components of $H$ and
add a cut constraint for each component that does not contain $s$. We refer to this routine as \emph{integral separation}.
For a fractional LP solution, we compute the maximum flow value $f_v$ between $s$ and each active node $v$ in $H$;
the capacity of an edge $e\in E^*$ is equal to $\hat{x}_{e}$.
If $f_v<\sum_{e\in \dlt(v)} \hat{x}_{e}$, a cut constraint based on the induced minimum $s$-$v$-cut is added.
We call this routine \emph{fractional separation}.
Both routines manage to find a violated constraint if there is any, i.e., they are \emph{exact} separation routines.
In fact, this shows that an optimal solution to \lpCut can be computed in polynomial time~\cite{grotschelLovaszSchrijver}.
Note that already integral separation
suffices to obtain an exact, correct algorithm---we simply may need more branching steps than with fractional separation.

\mysub{Relaxing variables.}
% - the fewer variables are used for branching, the better
% - if we introduce node variables, we have fewer variables to branch over
As presented above, our models have $\Theta(|E|)$ binary variables, each of which may be used for branching by the ILP solver.
We can reduce this number, by introducing $\Theta(|V|)$ new binary variables $y_v$, $v\in V$,
that allow us to relax the binary $x_e$-variables, $e\in E$, to continuous ones.
The new variables are precisely those discussed w.r.t.\ generalized subtour elimination, i.e., we require $y_v = \tfrac{1}{2} \sum_{e \in \dlt(v)} x_e$.
Assuming $x_e$ to be continuous in $[0, 1]$, we have for every edge $e = \{v, w\} \in E\colon$ if $y_v = 0$ or $y_w = 0$ then $x_e = 0$.
Conversely, if $y_v = y_w = 1$ then $x_e = 1$ by~(\ref{st:adj-edges}).
Hence, requiring integrality for the $y$-variables (and, e.g., branching only on them), suffices to  ensure integral $x$ values.

%We can introduce for each node $v\in V$ a variable $y_v \in \{0,1\}$ and a constraint $y_v = \nicefrac{1}{2} \sum_{e \in \dlt(v)} x_e$ and maximize the sum over all $y$ variables instead of the sum over all $x$ variables.
%We notice, that, if for an edge for both end nodes the $y$ variable is set to $1$, its $x$ is forced to be set to $1$ also.
%We also notice, that if for either one of the end nodes of an edge the $y$ variable is set to $0$, the $x$ variable of the edge has to be set to $0$ also, because of constraints (\ref{st:node}).
%Therefore we do not need the $x$ variables to be integral, as they only take on integral values in ILP solutions.
%With the $x$ variables being implicitly integral there is a much smaller ($|V|$ vs.\ $|V|+|E|$) set of variables for which integrality has to be enforced.

\mysub{Handling clique constraints.}
We use a modified version of the Bron-Kerbosch algorithm \cite{EppsteinLoefflerStrash10} to list all maximal cliques.
For each such clique we add a constraint during the construction of our model.
Recall that there are up to $3^{n/3}$ maximal cliques \cite{Moon65}, but preliminary tests show that this effort is negligible compared to solving the ILP.
Thus, as our preliminary tests also show, other (heuristic) approaches of adding clique constraints to the initial model are not worthwhile.
%To cope with the potentially high number of up to $3^{n/3}$ maximal cliques \cite{Moon65}, we also investigated heuristic approaches in preliminary experiments---however, none of the heuristic approaches pays off.
% -- old section --
%Since already finding a largest unweighted clique is NP-hard, we cannot hope for an exact separation routine for the clique constraints.
%Instead, we start with a heuristically found set of disjoint cliques in the graph, and add the corresponding constraints already in the initialization step.

%\section{Experimental Setup}
%\label{sec:setup}

\section{Computational Experiments}
\label{sec:experiments}

\mysub{Algorithms.}
We implement the best state-of-the-art algorithm, i.e., the \ilpWalk-based one by Matsypura et al.~\cite{matsypuraEtAl}.%as briefly described in Section \ref{sec:prelim} and \cite{arxivVersion}.%\cref{walk_appendix}.
We denote this algorithm by~``\WW''.
%We re-implement the strongest model from literature, \ilpWalk, including iteratively increasing the upper bound $T$ to yield the best practical performance; we denote this algorithm by~\WW.
%Since we use the same ILP solver for all considered algorithms, and given that the
%only implementation beyond standard generation of the ILP instance is a simple loop,
%we achieve a fair comparison between \ilpWalk and the new models.
%\todo{we did not mention in the paragraph above, that we implemented \texttt{W} in python and only cplex (we indeed suggested, that it is also implemented in c++ and with scip), so we don't need to be so defensive about our implementation choices}
%We note that our implementation (run on a different hardwawas re than theirs) was always at least as successful and fast as the results reported in~\cite{matsypuraEtAl}.
For our implementations of \ilpCut, we consider various parameter settings w.r.t.\ to the algorithmic considerations described in \cref{sec:compFrame}.
We denote the arising algorithms by ``\texttt{C}'' to which we attach sub- and superscripts defining the parameters:
the subscript ``$\mathrm{frac}$'' denotes that we use fractional separation in addition to integral separation. The superscript ``$\mathrm{n}$'' specifies that we introduce node variables as the sole integer variables.
The superscript ``$\mathrm{c}$'' specifies that we use clique constraints.
We consider all eight thereby possible \ilpCut implementations.

%flow%As we see below, \ilpFlow performs clearly worse than any \ilpCut variant in practice.
%flow%We thus only report on the variant \Fnc; it is feature-wise analogous to the cut-based variant~\Cncf.

\mysub{Hard- and software.}
Our C\raisebox{0px}{\texttt{++}} (GCC 8.3.0) code uses SCIP 6.0.1~\cite{SCIP} as the Branch-and-Cut-Framework with CPLEX 12.9.0 as the LP solver.
We use OGDF snapshot-2018-03-28~\cite{OGDF} %, in particular its push-relabel implementation,
for the separation of cut constraints. We use igraph 0.7.1~\cite{igraph} to calculate all maximal cliques.
For \WW, we directly use CPLEX instead of SCIP as the Branch-and-Cut-Framework.
This does not give an advantage to our algorithms, since CPLEX is more than twice as fast as SCIP \cite{Achterberg2009} and we confirmed in preliminary tests that CPLEX is faster on \ilpWalk.
However, we use SCIP for our algorithms, as it allows better parameterizible user-defined separation routines.
We run all tests on an Intel Xeon Gold 6134 with 3.2\,GHz and 256\,GB RAM running Debian~9.
We limit each test instance to a single thread with a time limit of $20$ minutes and a memory limit of $8$\,GB.

\mysub{Instances.}
We consider the instances proposed for \lipsc in \cite{matsypuraEtAl} as well as  additional ones.
Overall, our test instances are grouped into four sets: \CRN, \MG, \BAS and \BAL.
The first set, denoted by \CRN, is a collection of 22 real-world networks, including communication and social networks of companies and of characters in books, as well as transportation, biological, and technical networks. See~\cite{matsypuraEtAl} for details on the selection.
The \emph{Movie Galaxy} (\MG) set consists of 773 graphs representing social networks of movie characters~\cite{mg}.
While \cite{matsypuraEtAl} considered only 17 of them, we use the full set here.
The other two sets are based on the Barab\'asi-Albert probabilistic model for scale-free networks~\cite{BA99}.
In~\cite{matsypuraEtAl}, only the chosen parameter values are reported, not the actual instances. Our set \BAS
recreates instances with the same values:
30~graphs for each choice $(|V|,d) \in \{(20,3), (30,3), (40,3), (40,2)\}$,
	where $|E| = (|V|-d) \cdot d$ describes the density of the graph.
As we will see, these small instances are rather easy for our models.
We thus also consider a set \BAL of graphs on 100 nodes; for each
	density~$d\in\{2,3,10,30,50\}$ we generate 30 instances.
See \url{http://tcs.uos.de/research/lip} for all instances, their sources, and detailed experimental results.

\mysub{Comparison to the state-of-the-art.}

\newcommand{\Fa}{\multicolumn{1}{@{\ }r<{\phantom{.}}@{}}{}}
\newcommand{\FT}{\!\clap{{\clock}}}%emph{T}}}
\newcommand{\FM}{\!\clap{\emph{M}}}
\newcommand{\bm}[1]{\multicolumn{1}{@{\ }r<{\textbf{.}}@{}}{\textbf{#1}}}
\newcommand{\bn}[1]{\textbf{#1}}
\begin{table}[tb]\centering
%Running time (in seconds) on \CRN for selected implementation variants. \emph{T} and \emph{M} denote failed computations due to time or memory limits, respectively. The best times are marked in bold.
%	\\[2em]
	\caption{Running times [s] on \CRN except for \texttt{yeast} and \texttt{622bus} (solved by none).
	We denote timeouts by \,~\FT~ and mark times within $5\%$ of the minimum in bold.}
	\label{tab:soc-net}
	\scalebox{0.779}{%
	\begin{tabular}{@{}l@{\,}|@{\,}rrr@{\,}|*{10}{@{\ }r<{.}@{}l@{\ }}}
instance	&	\multicolumn{1}{@{\,}l}{$\mathrm{OPT}$}	&	$|V|$	&	$|E|$	&	\multicolumn{2}{c}{\WW}	&	\multicolumn{2}{c}{\Ci}	&	\multicolumn{2}{c}{\Cf}	&	\multicolumn{2}{c}{\Cci}	&	\multicolumn{2}{c}{\Ccf}	&	\multicolumn{2}{c}{\Cni}	&	\multicolumn{2}{c}{\Cnf}	&	\multicolumn{2}{c}{\Cnci}	&	\multicolumn{2}{c}{\Cncf}	\\
\hline
% script to get this data: fig2rw
		\texttt{high-tech}  & 13  & 33   & 91   & 15  & 40  & 0      & 90      & 1      & 11      &1 & 44 &	 	3 & 15 &    0      & 51      & 0       & 81      &\bm{0} & \bn{41} &    2 & 05    \\
		\texttt{karate}     & 9   & 34   & 78   & 2   & 98  & 1      & 73      & 1      & 65      &2 & 12 &	 	1 & 32 &    1      & 07      & 3       & 71      &\bm{0} & \bn{66} &    2 & 74    \\
		\texttt{mexican}    & 16  & 35   & 117  & 73  & 30  & 1      & 68      & 2      & 25      &1 & 12 &	 	3 & 59 &    1      & 22      & 1       & 34      &\bm{0} & \bn{87} &    0 & 99    \\
		\texttt{sawmill}    & 18  & 36   & 62   & 70  & 00  & 0      & 51      & \bm{0} & \bn{43} &0 & 50 &	 	\bm{0} & \bn{44} &    0      & 85      & 3       & 32      &0 & 82 &    3 & 34    \\
		\texttt{tailorS1}   & 13  & 39   & 158  & 83  & 80  & 4      & 78      & 7      & 92      &4 & 81 &	 	6 & 45 &    \bm{1}      & \bn{51}      & 1       & 87      &3 & 29 &    3 & 55    \\
\texttt{chesapeake} & 16  & 39   & 170  & 106 & 00  & \bm{1} & \bn{84} & 13     & 11      &2 & 11 &	 	11 & 00 &   2      & 29      & 4       & 88      &3 & 19 &    4 & 39    \\
		\texttt{tailorS2}   & 15  & 39   & 223  & 445 & 00  & 6      & 80      & 21     & 78      &11 & 92 &	 	14 & 91 &   3 & 20 & 4       & 31      &\bm{2} & \bn{89} &    3 & 14    \\
		\texttt{attiro}     & 31  & 59   & 128  & \Fa & \FT & 1      & 76      & 2      & 57      &2 & 48 &	 	1 & 75 &    1      & 20      & 1       & 75      &\bm{0} & \bn{89} &    1 & 19    \\
		\texttt{krebs}      & 17  & 62   & 153  & 522 & 00  & 3 & 86 & 28     & 21      &18 & 55 &	 	10 & 03 &   16     & 00      & 11      & 26      &3 & 90 &    \bm{2} & \bn{33}    \\
		\texttt{dolphins}   & 24  & 62   & 159  & \Fa & \FT & 7      & 95      & 27     & 59      &22 & 72 &	 	18 & 33 &   19     & 21      & \bm{2}  & \bn{99} &\bm{3} & \bn{01} &    4 & 70    \\
		\texttt{prison}     & 36  & 67   & 142  & \Fa & \FT & 13     & 36      & 5      & 87      &1 & 09 &	 	1 & 50 &    3      & 62      & 4       & 05      &\bm{1} & \bn{02} &    \bm{1} & \bn{02}    \\
		\texttt{huck}       & 9   & 69   & 297  & 41  & 70  & \Fa    & \FT     & 144    & 13      &19 & 46 &	 	42 & 22 &   114    & 27      & 11 & 63 &\bm{5} & \bn{96} &    7 & 49    \\
		\texttt{sanjuansur} & 38  & 75   & 144  & \Fa & \FT & 30     & 67      & 8      & 64      &24 & 86 &	 	10 & 33 &   8      & 22      & \bm{3}  & \bn{65} & \bm{3} & \bn{79} &    4 & 71    \\
		\texttt{jean}       & 11  & 77   & 254  & 121 & 00  & 464    & 89      & 52     & 89      &16 & 54 &	 	9 & 53 &    81     & 03      & 14      & 47      &\bm{3} & \bn{88} &    5 & 14   \\
		\texttt{david}      & 19  & 87   & 406  & \Fa & \FT & 666    & 25      & 719    & 46      &26 & 70 &	 	45 & 34 &   85     & 88      & 23      & 94      &\bm{6} & \bn{93} &    10 & 35   \\
\texttt{ieeebus}    & 47  & 118  & 179  & \Fa & \FT & 37     & 10      & 22     & 35      &39 & 82 &	 	10 & 60 &   15     & 69      & \bm{3}  & \bn{13} &22 & 72 &   5 & 61    \\
		\texttt{sfi}        & 13  & 118  & 200  & 44  & 40  & 47     & 41      & 4      & 39      &4 & 89 &	 	3 & 77 &    15     & 13      & 2  & 64 &3 & 31 &    \bm{2} & \bn{44}    \\
		\texttt{anna}       & 20  & 138  & 493  & \Fa & \FT & 21     & 58      & 296    & 69      &53 & 21 &	 	74 & 55 &   439    & 23      & 20      & 27      &\bm{7} & \bn{09} &    7 & 58    \\
		\texttt{usair}      & 46 & 332  & 2126 & \Fa & \FT & \Fa    & \FT     & \Fa    & \FT     &\Fa & \FT &	\Fa & \FT & \Fa    & \FT     & \Fa     & \FT     &\bm{922} & \bn{94} &  \Fa & \FT \\
		\texttt{494bus}     & 142 & 494  & 586  & \Fa & \FT & \Fa    & \FT     & 379    & 29      &\Fa & \FT &	379 & 97 &  \Fa    & \FT     & \bm{178}     & \bn{92}      &\Fa & \FT & \bm{170} & \bn{74}  \\
%\texttt{662bus}     & --- & 662  & 906  & \Fa & \FT & \Fa    & \FT     & \Fa    & \FT     &\Fa & \FT &	\Fa & \FT & \Fa    & \FT     & \Fa     & \FT     &\Fa & \FT & \Fa & \FT \\
%\texttt{yeast}      & --- & 2361 & 6646 & \Fa & \FT & \Fa    & \FT     & \Fa    & \FT     &\Fa & \FT &	\Fa & \FT & \Fa    & \FT     & \Fa     & \FT     &\Fa & \FT & \Fa & \FT
	\end{tabular}}
\end{table}

\begin{figure}[p]
	\begin{subfigure}{\textwidth}
		\includegraphics{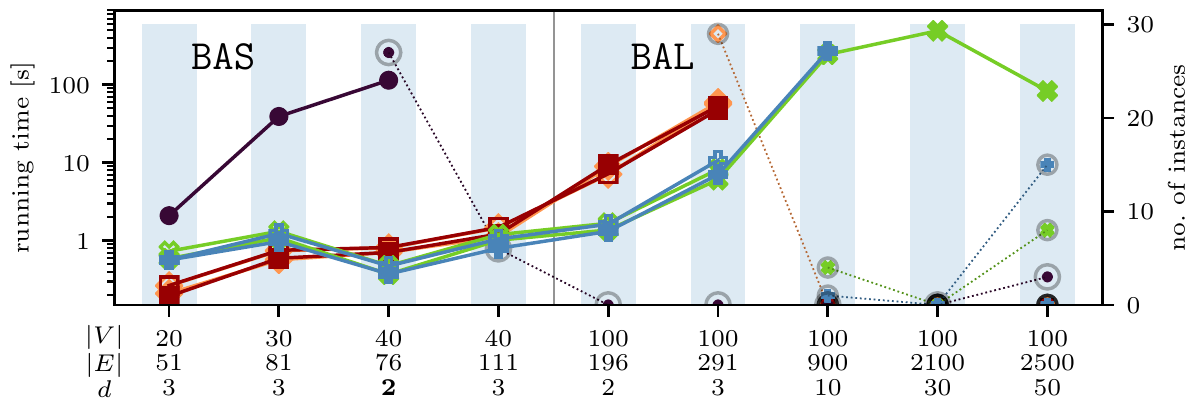}
			\caption{Running time on \BAS and \BAL
%				The markers connected by solid lines give the median time (if below the timeout).
%				Bars in the background give the number of instances.
%				Gray encircled markers, connected via dotted lines, show the number of successful instances (if not 100\%).
			\label{fig:ba}}
	\end{subfigure}
%	\hfill
%	\begin{subfigure}{.48\textwidth}
%		\includegraphics{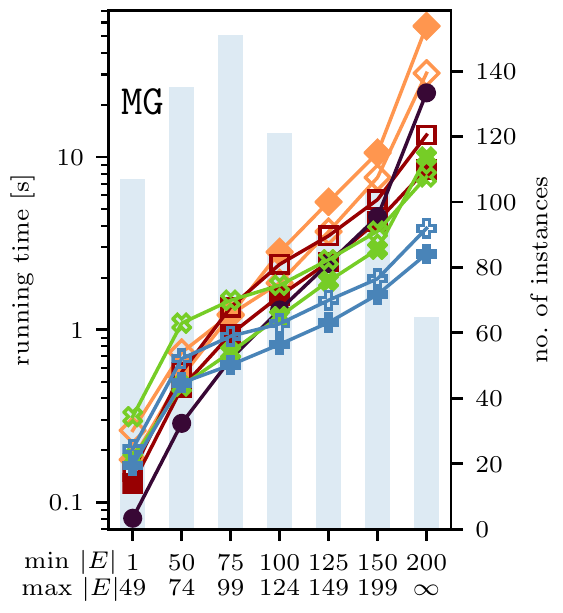}
%%			\vspace*{-1.6em}
%			\caption{
%				Running time on \MG
%%				The markers connected by solid lines give the median.
%%				Bars in the background give the number of instances.
%		\label{fig:compMG}}
%	\end{subfigure}

	\vspace*{1.5em}
%	\medskip
%	\begin{center}
%		\ %------------------------------------
%	\end{center}
	\begin{subfigure}{\textwidth}
		\begin{subfigure}{.47\textwidth}
			\begin{subfigure}{\textwidth}
				\includegraphics{mg_cuts.pdf}
				\caption{
						Running time on \MG
					\label{fig:mg}
					}
			\end{subfigure}
			\begin{subfigure}{\textwidth}
				\vspace{1.5em}
			%	\hspace{2.5em}
				\centering
				\includegraphics[width=.6\textwidth]{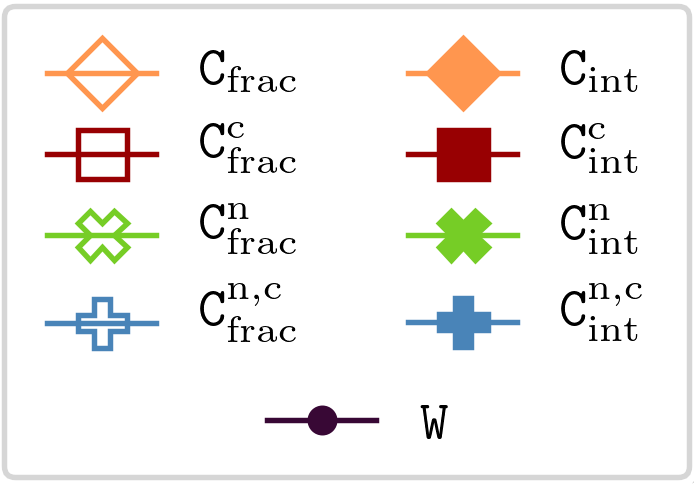}
			\end{subfigure}
		\end{subfigure}
		%\hfill
		\begin{subfigure}{0.52\textwidth}
			\begin{subfigure}{\textwidth}
				\includegraphics{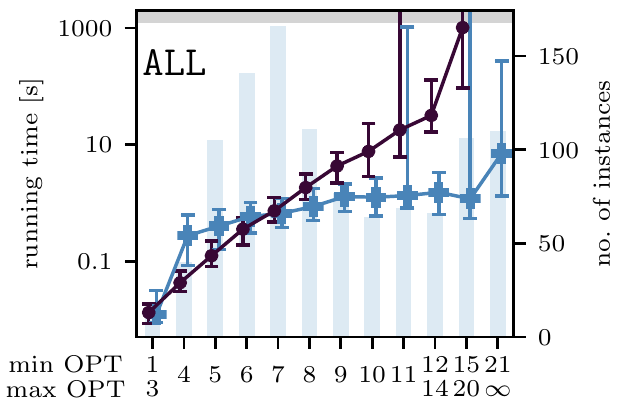}
				\caption{Running time vs.\ \textrm{OPT} (all instances)
					\label{fig:all}
				}
			\end{subfigure}
			\begin{subfigure}{\textwidth}
				\includegraphics{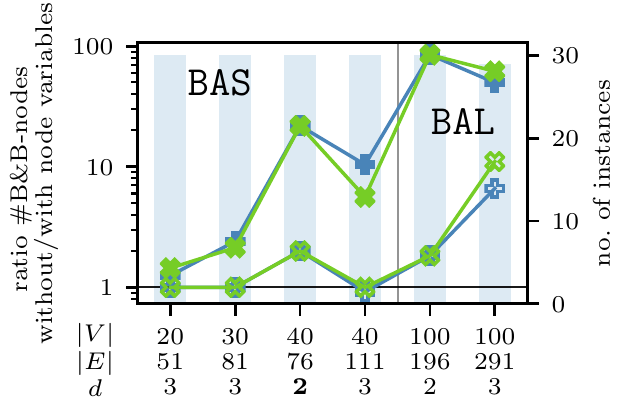}
				\caption{Reduction of B\&B-nodes by node var's on commonly solved \BAS and \BAL
					\label{fig:bnb}
				}
			\end{subfigure}
		\end{subfigure}
	\end{subfigure}

%	\medskip
%		\ %------------------------------------
%	\begin{subfigure}{1\linewidth}
%		\begin{subfigure}{.5\textwidth}
%			\centering
%			\includegraphics[width=\textwidth]{}
%		\end{subfigure}
%		\begin{subfigure}{.5\textwidth}
%			\centering
%			\includegraphics[width=\textwidth]{mg_cuts.pdf}
%		\end{subfigure}
%		\begin{subfigure}{.5\textwidth}
%			\includegraphics[width=\textwidth]{baSmall_cuts.pdf}
%		\end{subfigure}
%		\caption{Median running times of different \ilpCut implementations. For \CRN and \MG, we cluster the instances according to their size. Bars in the background give the number of instances. Gray encircled markers, connected via dotted lines, show the number of solved instances (if not 100\%).\label{fig:cuts}}
	\caption{
		Comparison between different ILP models.\\
			(a),(b): Each point is a median, where timeouts are treated as $\infty$ seconds.
			Bars in the background give the number of instances.
			Gray encircled markers, connected via dotted lines, show the number of solved instances (if not 100\%).\\
			(c): Whiskers mark the 20\% and 80\% percentile. The gray area marks timeouts.
		}
\end{figure}

%\section{Computational Experiments}
%\label{sec:experiments}

We start with the most obvious question: Are the new models practically more effective than the state-of-the-art?
See \cref{fig:ba} for \BAS and \BAL, \cref{fig:mg} for \MG, and \cref{tab:soc-net} for \CRN.

We observe that on every benchmark set, the various \ilpCut implementations achieve the best running times and success rates.
The only exceptions are the instances from \MG (cf. \cref{fig:mg}): there, the overhead of the stronger model, requiring an explicit separation routine, does not pay off and \WW yields comparable performance to the weaker of the cut-based variants. %flow%; \Fnc is weakest.
On \BAS instances, %flow%, even \Fnc is stronger than \WW , while
the cut-based variants dominate (cf.\ \cref{fig:ba}):
	while all variants (detailed later) solve all of \BAS, \WW can only solve the instances for $d\in\{20,30\}$ reliably.
%Since neither of these three instance sets (except for three exceptionally large instances in \CRN)
%	are particularly hard for the cut-based variants,
On \BAL (cf. \cref{fig:ba}) \WW %flow%and \Fnc
fails on virtually all instances.
The cut-based model, however, allows implementations
	(detailed later) that solve all of these harder instances.
We point out one peculiarity on the \BAL instances, visible in \cref{fig:ba}.
The instances have 100 nodes but varying density.
As the density increases from 2 to 30, the median running times of all algorithmic variants increase and the median success rates decrease.
However, from $d=30$ to $d=50$ (where only \Cni is successful) the running times drop again and the success rate increases. 
Interestingly, the number of branch-and-bound (B\&B) nodes for $d=50$ is only roughly 1/7 of those for $d=30$. 
This suggests that the denser graphs may allow fewer (near-)optimal solutions and thus more efficient pruning of the search tree.

\mysub{Comparison of cut-based implementations.}
Choosing the best among the eight \ilpCut implementations is not as clear as the general choice of \ilpCut over
%\ilpFlow and
\ilpWalk.
% before monday analysis
In \cref{fig:ba}, \ref{fig:mg}, and \cref{tab:soc-net} we see that, while adding clique constraints is clearly beneficial on \MG, on \BAS and \CRN the benefit is less clear.
On \BAL, we do not see a benefit and for $d\in \{30, 50\}$ we even see a clear benefit of \emph{not} using clique constraints.
Each of the graphs from \BAL with $d\in \{30, 50\}$ has at least $4541$ maximal cliques---and therefore initial clique constraints---, whereas the \BAL graphs for $d=10$ and the \CRN graphs \texttt{yeast} and \texttt{usair} have at most $581$ maximal cliques and all other graphs have at most $102$.
The probably most surprising finding is the choice of the separation routine:
while the fractional variant is a quite fast algorithm and yields tighter dual bounds, the simpler integral separation performs better in practice.
This is in stark contrast to seemingly similar scenarios like TSP or Steiner problems, where the former is considered by default. In our case, the latter---being very fast and called more rarely---is seemingly strong enough to find effective cutting planes that allow the ILP solver to achieve its computations fastest.
This is particularly true when combined with the addition of node variables (detailed later).
In fact, \Cni is the only choice that can completely solve all large graphs in \BAL.

Adding node variables (and relaxing the integrality on the edge variables) nearly always pays off significantly (cf.\ \cref{fig:ba}, \ref{fig:mg}).
\cref{fig:bnb} shows that the models without node variables require many more B\&B-nodes.
In fact, looking more deeply into the data, \Cni requires roughly as few B\&B-nodes as \Cf without requiring the overhead of the more expensive separation routine.
Only for \BAS with $|V| \in \{20, 30\}$, the configurations without node variables are faster; on these instances, our algorithms only require $2$--$6.5$ B\&B-nodes (median).

% Both Figure \ref{fig:mg_cVsNoC_m} and \ref{fig:baAll_cVsNoC_m} suggest, that heuristicly adding clique constraints while initializing \ilpCut makes no significant difference.
% As \texttt{C}$^\textrm{n,c}_\textrm{int}$ has a slight edge overall and it is theoretically stronger, we will only consider it from now on.
%
%
% \todo[inline]{MW: both paragraphs kindof rely on each other, in a way. The first needs the second as a justification for using \cvi (instead of any of \cvf, \cei and \cef), while the last sentence of the first justifies not using \nvi in the second}
% In Figure \ref{fig:mg_m} and \ref{fig:baAll_m} suggest that while, especially for smaller instances, the different \ilpCut configurations yield similar running times, \cvi is the fastest config.
% For bigger instances the difference gets bigger.
% \cvi is the only configuration, which was able to solve the instances from \texttt{BA-large} in it's entirety (?), while the other configurations were only able to solve ???\% at best.

% \begin{table}[]
% 	\caption{Optimum values of the given instance sets: median and median absolute deviation (MAD).}
% 	\label{tab:median-value}
% 	\centering
% 	\begin{tabular}{lrr}
% 		Instance set & Median & MAD \\
% 		\hline
% 		\texttt{social-networks} & ???  & ???\\
% 		\texttt{moviegalaxies} & 7.0 & 1.815\\
% 		\texttt{Barabasi-Albert-small} & 15.5 & 3.125\\
% 		\texttt{Barabasi-Albert-bigger} & 22.0 & 4.841\\
% 	\end{tabular}
% \end{table}

\begin{figure}[tb]
	\centering
	\begin{subfigure}[t]{.575\textwidth}
		\begin{subfigure}[t]{.392\textwidth}
			\includegraphics{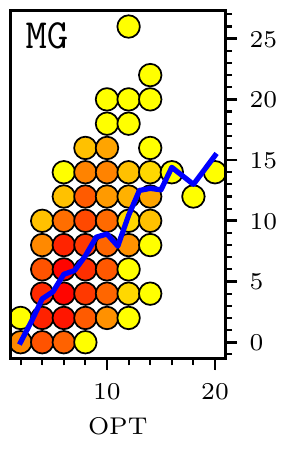}
		\end{subfigure}
		\begin{subfigure}[t]{0\textwidth}
			\includegraphics{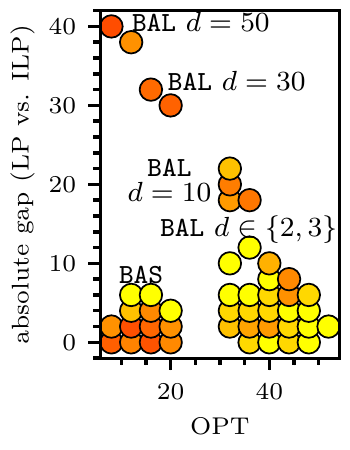}
		\end{subfigure}
		\caption{LP value vs.\ $\mathrm{OPT}$. Left: \MG; right: \BAS and \BAL.\label{fig:flame_vs_final}%\\Both plots share a common vertical axis.
		}
	\end{subfigure}
	\begin{subfigure}[t]{.025\textwidth}
		\vspace{-13.7em}
		\hspace{-1.95em}
		\includegraphics[scale=.72]{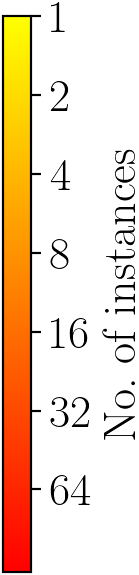}
		\phantom{x}
	\end{subfigure}
	\begin{subfigure}[t]{.330\textwidth}
		\includegraphics{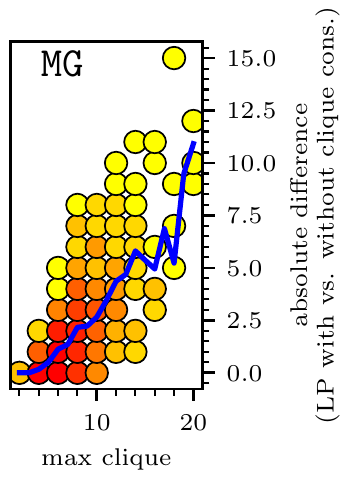}
		\caption{Maximal found clique size vs.\ LP value on \MG.\label{fig:flame_clique_mg}}
	\end{subfigure}\vspace{-.05em}
	\caption{Root LP relaxation of cut-based models. The blue line shows the median.}
\end{figure}

%\mysub{Running time dependence on the optimal value.}
\mysub{Dependency of running time on the optimal value.}
Since the instances optimal value $\mathrm{OPT}$ determines the final size of the \ilpWalk instance, it is natural
to expect the running time of \WW to heavily depend on $\mathrm{OPT}$.
\cref{fig:all} shows that this is indeed the case.
The new models are less dependent on the solution size, as, e.g., witnessed by \Cnci in the same figure.
% It does not show such a strong trend (beyond the fact that larger solutions are typically observed on larger instances), and the 20\% and 80\% percentiles are much wider.

\mysub{Practical strength of the root relaxations.}
For our new models, we may ask how the integer optimal solution value and the value of the LP relaxation
	(obtained by any cut-based implementation with exact fractional separation) differ,
	see \cref{fig:flame_vs_final}.
The gap increases for larger values of $\mathrm{OPT}$. Interestingly, we observe that the \emph{density}
of the instance seems to play an important role: for \BAS and \BAL, the plot shows obvious clusters, which---without
a single exception---directly correspond to the different parameter settings as labeled. Denser graphs lead to
weaker LP bounds in general.
%However, precisely for those graphs, the benefits of our clique constraints should be most prominent.
%\cref{fig:flame_clique_mg} shows the relative improvement to the LP relaxation when adding clique constraints;
%the benefit increases when we are able to find larger cliques.

%Figure \ref{fig:flame_vs_final} has a point for every pair of OPT value and root lp value.
%The color of the point shows how many instances contributed to this point.
%The x-axis shows the OPT-value, the y-axis the absolute difference between the OPT value and the root LP value.
%We can quite clearly see, that the value of OPT is not a major influence on the quality of the root relaxation.
%
\cref{fig:flame_clique_mg} shows the relative improvement to the LP relaxation when adding clique constraints for \MG instances.
On the other hand for every instance of \BAS and \BAL  the root relaxation did not change by adding clique constraints.
%The color of the point shows how many instances contributed to this point.
%The x-axis shows the size of the largest clique for which a clique constraint is added, the y-axis the difference between the LP values with and without clique constraints.

%In Figure \ref{fig:flame_sepa} shows how the lp value of \lpCut is significantly smaller then the lp value of \lpBase.
%The color of the point shows how many instances contributed to this point.
%The x-axis shows some value, the y-axis the absolute difference between the values of \lpBase and \lpCut.

% no room.. :-(
%\section{Conclusion}
%\label{sec:conc}
\section{Conclusion} We propose new ILP models for \lipsc and prove that they yield stronger relaxations in theory than the previous state-of-the-art.
Moreover, we show that they%the cut-based model
---generally, but also in particular in conjunction with further algorithmic considerations---clearly outperform all known approaches in practice.
We also provide strengthening inequalities based on cliques in the graph and prove that they form a hierarchy when increasing the size of the cliques.

It could be worthwhile to separate the proposed clique constraints (at least heuristically) to take advantage of their theoretical properties without overloading the initial model with too many such constraints.
As it is unclear how to develop an \emph{efficient} such separation scheme, we leave it as future research.

%%
%% Bibliography
%%

%% Please use bibtex,

% \todo[inline]{MC: the bibliography is currently a bit of a mess. i.p. some cites are only dummies in the bibliography. some cites in the text have no bibliography counterpart (e.g. papers referring to TSP and STP). AND: since we are short for space:\\
% -- first names of authors abbreviated\\
% -- URL/DOI only where absolutely necessary (not for arXiv papers; they have an ID)\\
% -- no editors\\
% -- journal and conference names abbreviated\\
% -- no geographic data (e.g. Berlin Heidelberg -- Springer suffices)\\
% -- generally: each reference should fit on (at most) two lines. webpages hopefully even on one.}

\bibliography{bib}
\vfill

\pagebreak
\appendix
\section*{APPENDIX}
\label{sec:apx}
\section{Walk-Based Model (State-of-the-Art)}
	\label{walk_appendix}
The following ILP model, denoted by \ilpWalk,
	was recently presented in~\cite{matsypuraEtAl}.
%	\footnote{
%		There, the corresponding algorithm is denoted \texttt{A3c}.
%		In contrast to the original publication, we adjust the objective of \ilpWalk to reflect the number of edges (instead of nodes) in the induced path.
%	}
It constitutes the foundation of the fastest known exact algorithm.
It models a timed walk through the graph that prevents ``short-cut'' edges.
Let $T$ denote an upper bound on the length of the path, i.e., on its number of edges.
For every node $v\in V$ and every point in time~$t \in [T+1]$ there is a variable~$x_v^t$ that is $1$ iff $v$ is visited at time~$t$~(\ref{st2:v}).
\begin{subequations}
\begin{align}
	\max~~~~&
	 \sum_{t=1}^{T} \sum_{v \in V} x_v^t  \label{eq2:max}\\
	 \text{s.t. }~~~~
	 &\sum_{v\in V} x_v^t \leq 1 && \forall t \in [T+1] \label{st2:step}\\
	 &\sum_{t=0}^{T} x_v^t \leq 1 && \forall v \in V \label{st2:visit}\\
	 &\sum_{v \in V} x_v^{t+1} \leq \sum_{v \in V} x_v^{t} && \forall t \in [T] \label{st2:ruption} \\
	 &x_v^t \leq 1 - \sum_{w \in V:vw\not\in E} x_w^{t+1} && \forall v \in V, t \in [T] \label{st2:nonadj}\\
	 &x_v^t \leq 1 - \sum_{\tau = t+2}^T x_w^\tau && \forall vw\in E, t \in [T-1] \label{st2:shortcut}\\
	 &x_v^t \in \{0,1\} && \forall v \in V, t \in [T+1] \label{st2:v}
\end{align}
In every step at most one node can be visited~(\ref{st2:step});
a node can be visited at most once~(\ref{st2:visit});
the time points have to be used consecutively~(\ref{st2:ruption});
nodes visited at consecutive
%subsequent
time points need to be adjacent~(\ref{st2:nonadj});
and nodes at non-consecutive time points
cannot be adjacent~(\ref{st2:shortcut}).
% \todo{Skip most of this and just present the idea instead?}

%We note that
However, \ilpWalk yields only weak LP relaxations (cf.\ \cref{sec:polyhedral}).
To obtain a practical algorithm, the authors of \cite{matsypuraEtAl}
	iteratively solve \ilpWalk for increasing values of~$T$ until its optimal objective value becomes less than $T$.
They use the graph's diameter as a lower bound on $T$ to avoid trivial calls.
In addition, they add supplemental symmetry breaking inequalities.
% -- to long --
%In addition, they add the following supplemental inequalities that are valid if there is a solution of length exactly~$T$: only the first and the last node may have degree~1, and if this
%holds only for one of them, it shall be the first to break symmetries. %\footnote{Denoted by inequalities~(18--19) in \cite{matsypuraEtAl}.} % footnote seems superfluous
%\begin{align}
%	 &x_v^{t+1} = 0 && \forall v \in V\colon \deg(v) = 1, t \in [T-1] \label{eq:add-1} \\
%	 &\sum_{v \in V : \deg(v) = 1} x_v^0 \geq \sum_{v \in V : \deg(v) = 1} x_v^T \label{eq:add-2}
%\end{align}
\end{subequations}

\section{Multi-Commodity-Flow Model}
	\label{sec:flow}
	A flow formulation allows a \emph{compact}, i.e., polynomially-sized, model.
	We start with \ilpBase and extend it in the following way:
	Each node $v \in V$ is assigned a commodity and sends---if $v$ is part of the induced path---two units of flow of this commodity from $v$ to $s$ using only selected edges, where edges have capacity one (per commodity).
	This ensures that each node in the solution lies on a common cycle with~$s$.
	Consider the bidirected arc set~$A^*\coloneqq \{ (vw), (wv) \mid \{v,w\}\in E^*\}$ that consists of a directed arc for both directions of each edge in $E^*$.
	Let $\dlt_\mathrm{out}(v)$ ($\dlt_\mathrm{in}(v)$) denote the arcs of $A^*$ with source (resp.\ target) $v\in V$.
	We use variables $z_a^v$ to model the flow of commodity~$v$ over arc~$a\in A^*$; we do not actively require them to be binary.
	The below model, together with \ilpBase, forms \ilpFlow.
	\begin{subequations}
	\begin{align}
		& z^v_{(uw)} \leq x_{\{u,w\}} && \forall v\in V, (uw)\in A^* \label{st:mcf-capacity} \\
	%    & z^v_{(vw)} = x_{\{v,w\}} && \forall v\in V, \{v,w\}\in E^* \label{st:mcf-pushflow} \\
		& \sum_{a \in \dlt_\mathrm{out}(w)}z^v_a = \sum_{a \in \dlt_\mathrm{in}(w)}z^v_a + \mathbbm{1}_{w=v} \cdot \sum\limits_{e\in\dlt(v)} x_e  && \forall w,v \in V \label{st:mcf-kirchhoff} \\
		& 0\leq z^v_a \leq 1 && \forall v \in V, a \in A^*
	\end{align}
	\end{subequations}
	The capacity constraints~(\ref{st:mcf-capacity}) ensure that flow is only sent over selected edges.
	Equations~(\ref{st:mcf-kirchhoff}) model flow preservation (up to, but not including, the sink $s$) and send the commodities away from their source $v$, if $v$ is part of the solution.

\end{document}